\documentclass[fleqn,10pt]{wlscirep}

\usepackage{graphicx}
\usepackage{dcolumn}
\usepackage{bm}
\usepackage{multirow}

\usepackage{listings}

\usepackage{amsmath,amssymb,amsfonts}
\usepackage{graphicx}
\usepackage{color}
\usepackage{appendix}
\usepackage{amsthm}
\usepackage{tikz}
\usepackage{subfig}

\newtheorem{theorem}{Theorem}

\newtheorem{lemma}{Lemma}
\newtheorem{definition}{Definition}

\newcommand{\be}{\begin{eqnarray}}
\newcommand{\ee}{\end{eqnarray}}
\newcommand{\nn}{\nonumber}
\newcommand{\bn}{\begin{enumerate}}
	\newcommand{\en}{\end{enumerate}}
\newcommand{\bl}{\begin{align}}
\newcommand{\el}{\end{align}}

\parskip 0.1 cm





\def\a{\alpha}

\def\d{\delta}



\def\r{\rho}


\def\p{\psi}

\def\P{\Psi}



\def\iff{\Longleftrightarrow}

\def\<{\langle}
\def\>{\rangle}



\def\cT{{\mathcal{T}}}
\def\cO{{\mathcal{O}}} 
\def\cE{{\mathcal{E}}} 
\def\vn{{\vec{n}}}
\def\vm{{\vec{m}}}
\def\hU{{\hat{U}}}

\def\ha{{\hat{a}}}

\def\bra#1{\mathinner{\langle{#1}|}}
\def\ket#1{\mathinner{|{#1}\rangle}}


\title{Generalized concurrence in boson sampling}

\author[1]{Seungbeom Chin}
\author[1,*]{Joonsuk Huh}
\affil[1]{Department of Chemistry, Sungkyunkwan University, Suwon 16419, Korea}

\affil[*]{joonsukhuh@skku.edu}


\begin{abstract}

A fundamental question in linear optical quantum computing is to understand the origin of the quantum supremacy in the physical system. It is found that the multimode linear optical transition amplitudes are calculated through the permanents of transition operator matrices, which is a hard problem for classical simulations (boson sampling problem).
We can understand this problem by considering a quantum measure that directly determines the runtime for computing the transition amplitudes.
In this paper, we suggest a quantum measure named ``Fock state concurrence sum'' $C_S$, which is the summation over all the members of ``the generalized Fock state concurrence'' (a measure analogous to the generalized concurrences of entanglement and coherence).
By introducing generalized algorithms for computing the transition amplitudes of the Fock state boson sampling with an arbitrary number of photons per mode, we show that the minimal classical runtime for all the  known algorithms directly depends on $C_S$. Therefore, we can state that \emph{the Fock state concurrence sum $C_S$ behaves as a collective measure that controls the computational complexity of Fock state BS}.
We expect that our observation on the role of the Fock state concurrence in the generalized algorithm for permanents would provide a unified viewpoint to interpret the quantum computing power of linear optics.    	
\end{abstract}

\begin{document}

\flushbottom
\maketitle
%
%
\thispagestyle{empty}

\section*{Introduction}
The extended Church-Turing thesis (ECT) states that every problem that can be efficiently computable with real physical devices are efficiently simulated with a Turing machine. It is expected that quantum computers would refute ECT by exploiting its inherent quantum supremacy. However, since scalable universal quantum computers that can perform actual quantum algorithms are not likely to be built in the foreseeable future, they are not ``real physical devices'' yet.  
 
Boson sampling (BS) \cite{Aaronson2011} was introduced to defeat ECT with more feasible quantum devices, i.e., the linear optical network (LON) implementation. BS is considered a non-universal quantum computer with multi-photons in the multimode optical network.
Aaronson and Arkhipov~\cite{Aaronson2011} claimed that the transition amplitudes with no more than one photon per mode becomes hard to simulate with classical computers as the system scale increases.

The computational hardness of BS is from the hardness of matrix permanents.
The transition amplitude from a pre-selected input state to a post-selected output state is determined by the permanent of a submatrix of a unitary matrix $U$ in the LON. When no more than one photon is in both all input and output modes of the system,  the amplitude can be classically simulated with Ryser's formula \cite{Ryser1963} (the best known algorithm for computing permanents).
We recall the definition of the permanent of an $M$-dimensional square matrix $A$ (Per[$A$]):
\begin{align}
\mathrm{Per}[A]=\sum_{\vec{{\sigma} }\in S}\prod_{i=1}^{M}A_{i,\sigma_{i}} ,
\label{eq:PERA}
\end{align}
where $A_{ij}$ are the entries of $A$ and the set $S$ includes all the  permutations of $(1,2,\ldots,M)$, $\{\vec{\sigma} \}$ ($\vec{\sigma} =(\sigma_1,\ldots \sigma_M)$ is an $N$-dimensional vector).  
The brute force computation of a matrix permanent in Eq. \eqref{eq:PERA} requires $N!$ terms in summation and each term is composed of the products of  $N$ elements of the matrix. Even though  Ryser's algorithm~\cite{Ryser1963} can perform the calculation in $O(2^{N-1}N^{2})$ arithmetic operations (it can be optimized further by Gray code as $O(2^{N-1}N)$ operations), the number of operations still increases exponentially with $N$ (it was shown in Valiant (1979) \cite{valiant} and Aaronson (2011) \cite{aaronson2011b} that the computation of permanent is a $\#$P-hard problem).
Glynn~\cite{Glynn2010,Glynn2013} derived a different algorithm that has the same order of computational cost with that of Ryser's. Even though Jerrum et al. \cite{jerrum2004} suggested a polynomial-time approximation algorithm for the permanents of matrices with non-negative elements, there exists no algorithms for arbitrary matrices that are more efficient than Ryser's and 
Glynn's yet. On the other hand, there have been some efforts in developing randomized algorithms for the permanents. Gurvits used Glynn's formula to design a randomized algorithm~\cite{gurvits2005}, and Aaronson and Hance~\cite{aaronson2012} generalized Gurvits's sampling algorithm for matrices with either of repeated columns or repeated rows.  A more generalized algorithm for matrices with repeated rows and columns, which can estimate the complexity of Fock state BS with multiple photons in both input and output modes, is introduced by Yung $et$ $al.$ \cite{yung2016}
When these algorithms are randomized, they estimate the matrix permanent with additive errors in polynomial runtimes.  In this paper, we propose another generalized algorithm for matrices with repeated rows and columns. It is achieved by exploiting a series expansion of a product of variables regarding the linear combinations of variables \cite{kan:2007}.

The classical minimal runtimes ($\cT_{min}$) of the algorithms mentioned above have interesting mathematical features, which render the algorithm to be related to a more general viewpoint of quantum complexity.  
The first observation is that the algorithm we derived here has the same $\cT_{min}$ as that of the formula in Yung $et$ $al.$ \cite{yung2016}, such as Ryser's and Glynn's have the same $\cT_{min}$. Considering the two algorithms arise from very different mathematical structures, we can regard that the obtained runtime is a rigorous criterion for the computational complexity of Fock state BS.
The second obseration is that the functional form of $\cT_{min}$ contains a summation of \emph{elementary symmetric polynomials}. They have an intimate functional relation with the recently introduced coherence monotones, the coherence rank and generalized coherence concurrence \cite{killoran, Chin2017cn, Chin2017gcn}. This motivates us to define \emph{the generalized Fock state concurrence} for a given state $|\vn\>$, which consists of \emph{the Fock state $k$-concurrence} denoted by $C_{k}(\vn)$ with $0 \le k \le N$, and \emph{the Fock state concurrence sum} (the summation of $C_{k}(\vn)$ from $k=0$ to $k=N$ and denoted by $C_S(\vn)$). The Fock state concurrence sum $C_S(\vn)$ is directly related to the amount of $\cT_{\min}$. We can state that the increase of $C_S(\vn)$ results in larger computational complexity, or $C_S(\vn)$ is a quantum resource for the complexity of the given  system.

The concept of Fock state concurrence can also be compared with the Boltzmann entropy of the elementary quantum complexity $S_B^q$ introduced in Chin $et$ $al.$ (2017)\cite{Chin2017maj}, which naturally emerges from the additive error bound for the approximated permanent estimator. By encompassing entropy and concurrence,
our suggestion in this paper would provide the foundation for the quantum resource theory of linear optical quantum computing. In other words, by understanding the role of these quantum measures, we could find the origin of the quantum supremacy in quantum linear optics.

\section*{Results}

First, the generalized Fock state concurrence and the concurrence sum are defined, and their physical relation with the generalized coherence concurrence \cite{Chin2017gcn} is explained. Then an algorithm for computing the transition amplitudes of Fock state BS with multiple photons in input/output modes is proposed. By analyzing the minimal runtime of three algorithms (including ours) for computing the transition amplitudes of Fock state BS, we show that the Fock state concurrence sum is a quantum resource that determines the complexity of a given Fock state BS system.

\subsection*{The generalized Fock state concurrence family and the concurrence sum}\label{concurrence}

Many theoretical analyses support the belief that quantum computers can perform some tasks faster than classical computers. Accordingly, it has been of particular interest to find the resources required for the quantum speedup.
It is believed that entanglement is a critical resource for universal quantum computers \cite{jozsa2003,vidal2003,van2006};  however, the efficiency does not simply depend on the amount of entanglement \cite{van2013}.
It is also recently shown that the original Grover algorithm monotonically consumes coherence during the searching process \cite{shi2017, Chin2017cn}.
There have been attempts to approach the problem in the reverse  direction as well, i.e., to find conditions for a quantum system not to have any speedup. It is rigorously shown  that nonnegative probability quasi-distributions (PQD) result in no quantum speedup \cite{veitch2012, mari2012, veitch2013}.  

In the case of BS,
the photon indistinguisability is considered the origin of the computational complexity in the Fock state BS \cite{Aaronson2011, tichy2015}, and the degree of complexity is closely related to the majorization of the input-output photon distributions \cite{Chin2017maj}. Whereas
Rahimi-Keshari $et.$ $al$ (2016)\cite{rahimi2016} approached this problem from the perspective of quasi-probability distributions (QPD), showing that the negativity of probability quasi-distribution (PQD) of linear optical networks is the necessary resource for the complexity.  

In this section, we define a quantum measure from the multi-photon distribution patterns in multimode optical systems, \emph{the generalized Fock state concurrence} and \emph{the Fock state concurrence sum}. The generalized Fock state concurrence is a quantity analogous to the generalized entanglement concurrence \cite{gour2005} and generalized coherence concurrence \cite{Chin2017gcn}. It will be shown in the later section that the Fock state concurrence sum becomes a resource that determines the complexity of Fock state BS.

\subsubsection*{Definitions}

In the linear optical network of $M$ optical modes into which $N$ photons are injected, the Fock state vector is written as
\begin{align}
|\vn\>=\ket{n_1, n_2, ..., n_M}, \qquad \sum_i^M n_i =N
\end{align} 
where $n_i$ represents the photon number for the $i$th mode (it is worth emphasizing that $n_i$ can be greater than 1 for our later discussion on the generalized Fock state BS). Then the coherence rank and $k$-concurrence of a Fock state $|\vn\>$ is defined as follows:

\begin{definition}\label{rank} The Fock state coherence rank for a given Fock state $\vn$ is defined as
	the integer $ \a_{\vec{n} }$, the number of nonzero elements for the particle distribution vector $\vn$. 
\end{definition}

\begin{definition}\label{kconcurrence}
	The Fock state $k$-concurrence for a given Fock state $\vn$ is defined with the elementary symmetric polynomial as  
	\begin{align}
	C_k(\vec{n}) \equiv \Bigg( \frac{1}{\binom{N}{k} } X_k(\vec{n}) \Bigg)^{\frac{1}{k}},
	\end{align}
	where $X_k(\vn)$ ($k$th elementary symmetric polynomial) is defined as
	\begin{align}
	X_k(\vn) = \sum_{i_1<i_2<\cdots <i_k =1 }^{\a_{\vec{n}}} n_{i_1}n_{i_2}\cdots n_{i_k}, \quad (0 \le k\le\a_{\vn}\le  M)
	\label{esp}
	\end{align}
	(we define $X_0(\vn)=1$, and $X_1(\vn)=N$ for any $\vn$).
\end{definition}
This is normalized so that $C_k(\vn)$ becomes 1 when $\vn$ is maximally coherent, i.e., $\vn=(\vec{1}_{N}, \vec{0}_{M-N})$. The Fock state $k$-concurrences $C_k(\vn)$ from $0\le k \le N$ constitute the generalized Fock state concurrence family.
\begin{definition}\label{concurrencesum}
	The Fock state concurrence sum for a given Fock state $\vn$ is defined as
	\begin{align}
	C_S(\vec{n}) \equiv \frac{1}{2^N}\sum_{k=0}^{\a_\vn} \binom{N}{k}[C_k(\vec{n})]^k,
	\end{align} 
\end{definition}
The factor $1/2^N$ is multiplied for the normalization, i.e., $C(\vn)=1$ when $\vn$ is maximally coherent.
Since $X_k$ are all Schur concave functions,  which decrease as the Fock state vector $\vn$ is more majorized \cite{olkin2}, the $k$-concurrences are also Schur concave functions. Hence the concurrence sum $C_S(\vn)$ is also a Schur concave function.

To calculate the Fock state $k$-concurrence (and concurrence sum) of those states which are not expressed with a single photon distribution vector, we need to consider more comprehensive definitions than Definition \ref{kconcurrence}. It can be achieved in a similar manner to the generalized concurrences of entanglement and coherence \cite{gour2005, Chin2017gcn}, the situation is slightly different for our case though. The well-known convex-roof extention (see, e.g., Eltschka $et$ $al.$ (2014)\cite{elt}) is not exactly suitable here, for Definition \ref{kconcurrence} does not embrace the pure states that are superpositions of photon distribution vectors, i.e., when $|\p\> =\sum_{\vn}\p_\vn |\vn\>$ ($\sum_{\vn}|\p_\vn|^2=1$). Hence, we need two steps of extension for the generalized Fock state concurrence family:

\begin{definition}
	\label{generalconcurrence}
	The Fock state $k$-concurrence of a pure state 
	$|\p\> =\sum_{\vn}\p_\vn |\vn\>$ is defined as 
	\begin{align}
	C_k(|\p\>) \equiv \sum_{\vn}|\p_\vn|^2 C_k(|\vn\>), 
	\end{align}
	and the Fock state concurrence sum of $|\p\>$ as
	\begin{align}
	C_S(|\p\>) \equiv \sum_{\vn}|\p_\vn|^2 C_S(|\vn\>).
	\end{align}
	The Fock state $k$-concurrence of a mixed state $\r$, which can be pure-state-decomposed as $\r= \sum_{a}\r_{a}|\p_a\>\<\p_a|$, is defined with the convex roof extension as 
	\begin{align}
	C_k(\r) \equiv \min_{\{\r_a,|\p_a \> \}} \r_aC_{k}(|\p_a\>),
	\end{align}
	and the Fock state concurrencce sum of $\r$ as
	\begin{align}
	C_S(\r) \equiv \min_{\{\r_a,|\p_a \> \}} \r_aC_{S}(|\p_a\>).
	\end{align}
\end{definition}

\subsubsection*{Comparison of the Fock state concurrence with the single particle coherence concurrence}

We can explain the intuitive relation between the Fock state coherence and the single photon coherence, which will clarify our concept of the Fock state concurrence.

\begin{figure}[htbp]
	\centering
	\subfloat{{\includegraphics[height=4cm]{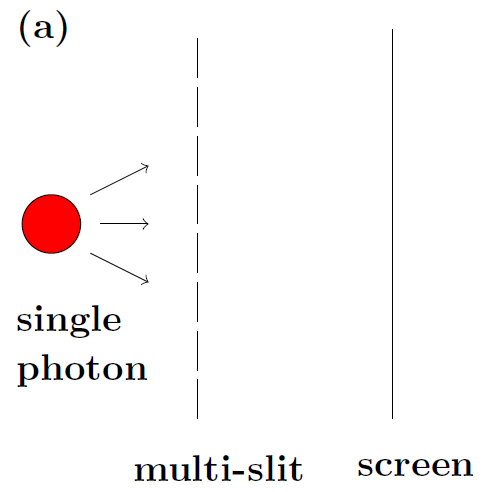} }}%
	\qquad
	\subfloat{{\includegraphics[height=4cm]{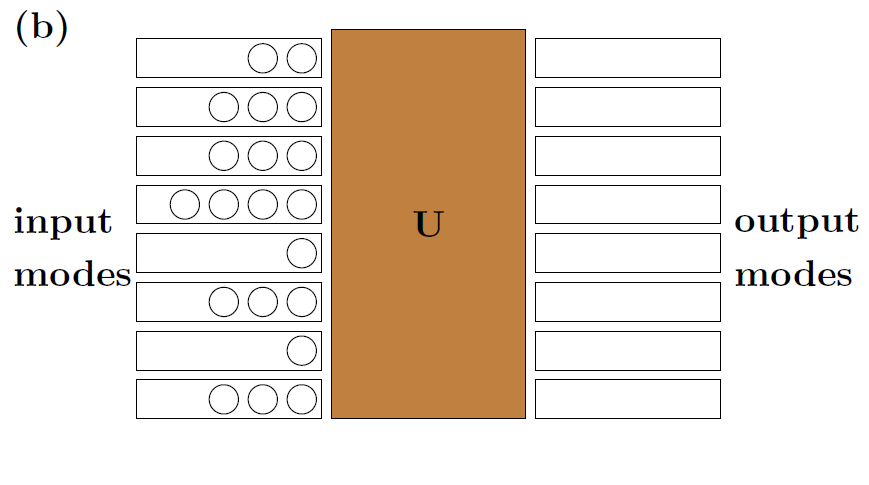} }}%
	\caption{(a) A single photon that passes through a well-separated multi-slit (b) Multi-photons injected to a multi-mode linear optical network system $U$}
\end{figure}

The coherence as one of the fundamental non-classicalities is originated from the framework of superposition \cite{killoran, theurer2017}, i.e., the partition of probability among several states for one quantum system. Since coherence is basis-dependent, we first need to fix a computational basis set. The quantification of coherence is possible under a given normalized basis set $\{|i\> \}_{i=1}^{d}$, and we can state that a pure state is coherent in the basis set if and only if 
\begin{align}
|\p\> = \sum_{i=1}^{k >1}\p_i|i\>.
\end{align}
When $k=1$, $|\p\>$ is incoherent (the mixed state extension of coherence is straightforward. See Baumtratz $et$ $al.$ (2014)\cite{baumgratz2014}). The degree of coherence is determined by the probability amplitude of the state, i.e.,
\begin{align}
P(|\p\>) = (|\p_1|^2, |\p_2|^2, \ldots, |\p_d|^2)\qquad (\sum_{i=1}^d |\p_i|^2=1).
\end{align}
The concept of majorization plays a crucial role here (for two nonincreasing real vectors $\vec{x}$ and $\vec{y}$ of dimension $d$, we state that $\vec{x}$ is majorized by $\vec{y}$ (or $\vec{x} \prec \vec{y}$) if and only if $\sum_{i=1}^k x_i \le \sum_{i=1}^k y_i$ for all $k< d$ and $\sum_{i=1}^d x_i = \sum_{i=1}^d y_i$ \cite{olkin2}). Indeed, for two pure states $|\p\>$ and $|\phi\>$, the following relation holds \cite{du2015}:
\begin{align}
&\textrm{$P(|\p\>)$ is majorized by $P(|\phi\>)$} \nn \\ &\quad \iff\quad  \textrm{$|\p\>$ is more coherent  than $|\phi\>$, i.e., $|\p\>$ can be transformed to $|\phi\>$ with incoherent operations (IO). } 
\label{majcoh}
\end{align} 

Two extremal cases are when $P(|\p\>) = \textrm{perm}[(1,0,\ldots,0)]$ (perm[$\vec{v}$] denotes any permutation vector of $\vec{v}$) and $P(|\p\>) = \frac{1}{\sqrt{d}} (1,1,\ldots, 1)$. The state is incoherent for the former and maximally coherent for the latter. There are several coherence measures that satisfy \eqref{majcoh} (see, e.g., Streltsov $et$ $al.$ (2017)\cite{streltsov2017} for some examples).  

One specific example is the d-slit experiment of a photon (Fig. 1 (a)). When each slit is well-separated from the others, the photon state that passes through the slit is represented in the computational basis set $\{ |\p_1\>, |\p_2\> ,\ldots , |\p_d\> \}$ ($\<\p_i|\p_j\>= \d_{ij}$), where $|\p_i\>$ corresponds to the case when the photon passes the $i$-th slit. Then the photon state is given by
\begin{align}
|\P\> = \sum_{i=1}^d c_i|\p_i\> \qquad \Big( \sum_i|c_i|^2=1 \Big).
\end{align}   
Therefore, we can state that the coherence of $|\p\> $ is determined by 
\begin{align}
P(|\P\>) = (|c_1|^2, |c_2|^2, \ldots, |c_d|^2).
\end{align}
When $|c_i|^2 =1$ for some $i$, the state is incoherent and passes through the $i$-th slit deterministically. This state represents the particle-like property of the photon. On the other hand, when $|c_i|^2=1/\sqrt{d}$ for all $i$, the state is maximally coherent and represents the wave-like property. The analysis can be generalized to the mixed state case by attaching a detector at the slit \cite{bera2015, Chin2017gcn}.
The coherence $k$-concurrence $C_c^{(k)}$ for a pure state  $|\P\>$ is given by \cite{Chin2017gcn}
\begin{align}
\label{slitkconcurrence}
C_c^{(k)}(|\P\>) = d\Bigg( \frac{1}{\binom{d}{k}} \sum_{i_1<i_2<\cdots <i_k =1}^{d} X_k\Big[P(|\P\>)\Big]\Bigg)^\frac{1}{k},
\end{align}   
where $X_k\Big[P(|\P\>)\Big]$ is the $k$-th elementary symmetric polynomial of $P(|\P\>)$.

In the multi-photon case as to the Fock state BS, the probablity for each photon to be in a specific mode is 1. On the other hand, the probability distribution of the initial $N$-photon in $M$-modes are given by 
\begin{align}
P(|\vn\>) = \Big(\frac{n_1}{N},\frac{n_2}{N}, \ldots, \frac{n_M}{N} \Big)
\end{align}
for the photon distribution vector $|\vn\> = (n_1,n_2,\ldots ,n_M)$ (Fig 1. (b)). Therefore, Definition \ref{kconcurrence} of Fock state coherence $k$-concurrence is analogous to Eq. \eqref{slitkconcurrence} with the replacement of ($d$, $|\P\>$) with ($N$, $|\vn\>$).

\subsection*{The Fock state concurrence sum $C_S$ and the complexity of Fock state BS}\label{algorithm}

In this section, we show that the Fock state concurrence sum $C_S$ defined in the former section plays a crucial role in the complexity of Fock state BS.
To see the relation, we first derive a generalized algorithm for computing the transition amplitudes of Fock state BS with multiple photons in input/output modes. An intriguing fact about our algorithm is that the minimal runtime $\cT_{\min}$ for the algorithm is equal to that of another generalized algorithm presented in Yung $et$ $al$. (2016) \cite{yung2016}. Furthermore, the functional form of $\cT_{\min}$ explicitly contains the Fock state concurrence sum $C_S$. This implies that $C_S$ is a quantum measure that determines the computational complexity of a generalized Fock state BS system.


\subsubsection*{The derivation of a generalized algorithm for matrix permanents}

In the linear optical network of $M$ optical modes characterized by a unitary transformation $\hU$, the photon creation and annihilation operators $\ha^{\dagger}_i$ and $\ha_i$  in the $i$-th mode ($i=1,\ldots ,M$) rotate under the acton of $\hU$ as 
\begin{align}
\hU \ha_i^{\dagger} \hU^\dagger = \sum_{j=1}^MU_{ij}\ha^\dagger_j, \qquad \hU \ha_i \hU^\dagger = \sum_{j=1}^MU^*_{ij}\ha_j.
\end{align}

Scheel \cite{scheel2004} showed that the transition amplitude between the two Fock states $|\vn\>$ ($=\ket{n_1, n_2, ..., n_M}$, $\sum_i^M n_i =N$) and $|\vm\>$ ($=\ket{m_1, m_2, ..., m_M}$, $\sum_i^M m_i=N$ ) is proportional to the matrix permanent:
\begin{align}
\bra{\vm} \hat{U}\ket{\vn}
&=\frac{ \mathrm{Per}\big( [U]_{\vn,\vm}\big)  }{\prod_{k=1}^{M}\sqrt{n_{k}! m_{k}!} }, 
\label{eq:permanent}
\end{align}
where $[U]_{\vn,\vm}$ is an $N\times N$ submatrix of $U$, which has $n_i$ of the $i$-th rows of $U$ and $m_j$ of $j$-th row of $U$. $N= \sum_{i}^{M}n_{i}=\sum_{i}^{M}m_{i}$ is the total number of photons.


The relation \eqref{eq:permanent} holds for the arbitrary square complex matrix $A$ \cite{ma1990}, i.e.,
\begin{align}
\bra{\vm} \hat{A}\ket{\vn}
&=\frac{ \mathrm{Per}\big( [A]_{\vn,\vm}\big)  }{\prod_{k=1}^{M}\sqrt{n_{k}! m_{k}!} }, 
\label{eq:permanentA}
\end{align} where $\hat{A}$ can be expressed as $\hat{A}=\exp[\sum_{i,j}\hat{a}^\dagger_i(\ln A^T)_{ij}\hat{a}_j]$. This  implies that a matrix permanent is obtained by calculating the corresponding transition amplitude between the given input-output Fock states. We will exploit this relation to obtain a generalized algorithm for matrix permanents.

The following lemma \cite{kan:2007} is useful for deriving our algorithm.

\begin{lemma}\label{kan}
	For a vector $\vec{x} =(x_1, x_2,\ldots, x_M)$, the following identity holds:
	\begin{align}
	\label{expansion}
	x_{1}^{n_{1}}\cdots x_{M}^{n_{M}}= =\frac{1}{N!2^{N}}\sum_{v_{1}=0}^{n_{1}}\cdots\sum_{v_{M}=0}^{n_{M}} & (-1)^{N_{v}}
	\begin{pmatrix}
	n_{1}\\v_{1}
	\end{pmatrix}
	\cdots
	\begin{pmatrix} 
	n_{M}\\v_{M}
	\end{pmatrix}
	\Big[\sum_{i=1}^{M} (n_i-2 v_i)x_i\Big]^{N},
	\end{align}
	where  $N_v =\sum_{i=1}^M v_i$ and $N=\sum_{i=1 }^M n_i$.
\end{lemma}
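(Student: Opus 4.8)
The plan is to prove Lemma~\ref{kan} by a generating-function / polarization argument, extracting a specific multilinear coefficient from a power of a linear form. First I would observe that the right-hand side is, up to the constant $\tfrac{1}{N!2^N}$, an inclusion--exclusion sum (note the $(-1)^{N_v}$ and the binomial weights) applied to the quantity $\big[\sum_i(n_i-2v_i)x_i\big]^N$. The natural device is to write $n_i - 2v_i = \sum_{j=1}^{n_i}\epsilon_{ij}$ where the ``sign variables'' $\epsilon_{ij}\in\{+1,-1\}$ are chosen so that exactly $v_i$ of them equal $-1$; then summing over $v_i$ with the binomial coefficient $\binom{n_i}{v_i}$ and sign $(-1)^{v_i}$ is precisely summing independently over all $\epsilon_{ij}\in\{\pm1\}$. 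Hence the whole right-hand side equals
\begin{align}
\frac{1}{N!2^N}\sum_{\{\epsilon_{ij}\in\{\pm1\}\}}\Big[\sum_{i=1}^{M}\sum_{j=1}^{n_i}\epsilon_{ij}\,x_i\Big]^{N}.
\label{prop:epsform}
\end{align}

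Next I would relabel the $N$ sign variables as $\epsilon_1,\ldots,\epsilon_N$ and the associated $x$-values as $y_1,\ldots,y_N$ (so $y$ takes the value $x_i$ with multiplicity $n_i$), reducing the claim to the classical polarization identity
\begin{align}
y_1 y_2\cdots y_N=\frac{1}{N!2^N}\sum_{\epsilon\in\{\pm1\}^N}\Big(\sum_{k=1}^{N}\epsilon_k y_k\Big)^{N},
\label{prop:polar}
\end{align}
which, after substituting $y_k = x_{i(k)}$ and collecting equal values, reproduces the left-hand side $x_1^{n_1}\cdots x_M^{n_M}$ exactly. To establish \eqref{prop:polar} I would expand the $N$-th power multinomially: a monomial $y_1^{a_1}\cdots y_N^{a_N}$ with $\sum_k a_k=N$ picks up the factor $\sum_{\epsilon}\prod_k \epsilon_k^{a_k}=\prod_k\big(\sum_{\epsilon_k=\pm1}\epsilon_k^{a_k}\big)$, and $\sum_{\epsilon_k=\pm1}\epsilon_k^{a_k}$ vanishes when $a_k$ is odd and equals $2$ when $a_k$ is even. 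Since the $a_k$ are nonnegative integers summing to $N$, the only way for all of them to be even while none exceeds the total budget $N$ in the ``generic'' sense needed is the term $a_1=\cdots=a_N=1$; any surviving term has all $a_k$ even, but then $\sum a_k = N$ forces a partition of $N$ into even parts, and the multinomial coefficient $\binom{N}{a_1,\ldots,a_N}$ together with the $2^N$ normalization must be checked — the clean case is the squarefree term, which contributes multinomial coefficient $N!$, sign-sum $2^N$, and monomial $y_1\cdots y_N$, giving exactly the left side.

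The main obstacle is precisely this last bookkeeping step: one must argue that \emph{only} the fully multilinear term $y_1\cdots y_N$ survives. This is not automatic from ``all $a_k$ even,'' since e.g. $a=(2,2,0,\ldots,0)$ also has all parts even; the resolution is that we are free to treat $y_1,\ldots,y_N$ as formal indeterminates (the identity is polynomial in the $y_k$), and then match coefficients of each multilinear monomial on both sides — on the left only $y_1\cdots y_N$ appears, so all non-multilinear contributions on the right must cancel, which is exactly the statement that $\sum_{\epsilon}\prod\epsilon_k^{a_k}=0$ unless every $a_k\ge 1$, and combined with $\sum a_k=N$ this forces $a_k\equiv1$. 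I would therefore present the argument as a formal-variable coefficient comparison rather than a term-by-term evaluation, which sidesteps the partition-counting entirely. Finally I would remark that the reduction \eqref{prop:epsform}$\to$\eqref{prop:polar} uses nothing about the $x_i$ beyond commutativity, so the identity holds over any commutative ring, in particular when the $x_i$ are later specialized to matrix entries or to operators $\hat a_i^\dagger$ in the application of \eqref{eq:permanentA}.
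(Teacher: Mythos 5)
Your overall strategy is sound and is essentially the standard route to Kan's identity (the paper itself gives no proof, deferring to Kan (2007)): the signed binomial sum over $v_i$ is the same as an independent sum over $n_i$ sign variables $\epsilon_{i1},\dots,\epsilon_{in_i}\in\{\pm1\}$, with $v_i$ counting the $-1$'s, which reduces the claim to a Glynn-type polarization identity in $N$ formal variables. However, as written the proof contains a genuine error: you drop the sign factor. Since $(-1)^{N_v}=\prod_i(-1)^{v_i}=\prod_{i,j}\epsilon_{ij}$, the product $\prod_{i,j}\epsilon_{ij}$ must appear inside your $\epsilon$-sum, and the polarization identity you actually need is
\begin{align*}
y_1\cdots y_N=\frac{1}{N!\,2^N}\sum_{\epsilon\in\{\pm1\}^N}\Big(\prod_{k=1}^N\epsilon_k\Big)\Big(\sum_{k=1}^N\epsilon_k y_k\Big)^N .
\end{align*}
The version without $\prod_k\epsilon_k$ that you display is false already for $N=1$ (it gives $0$) and for $N=2$ (it gives $(y_1^2+y_2^2)/2$ rather than $y_1y_2$).

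The omission then propagates into your parity bookkeeping, which comes out inverted. With the sign factor in place, the monomial $y_1^{a_1}\cdots y_N^{a_N}$ acquires the weight $\prod_k\big(\sum_{\epsilon_k=\pm1}\epsilon_k^{a_k+1}\big)$, which vanishes unless every $a_k$ is \emph{odd} (not even); together with $a_k\ge 0$ and $\sum_k a_k=N$ over $N$ indices this forces $a_k=1$ for all $k$, so the multilinear term is the \emph{only} survivor, contributing $N!\cdot 2^N\cdot y_1\cdots y_N$ and cancelling the prefactor exactly. There is then no residual case such as $a=(2,2,0,\dots,0)$ to dispose of, and no need for the ``coefficient matching'' step you propose --- which, as stated, is circular: you cannot conclude that the non-multilinear contributions on the right cancel ``because the left side contains only $y_1\cdots y_N$'' when the equality of the two sides is precisely what is being proved. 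With the sign factor restored, the remainder of your argument (the relabelling $y_k=x_{i(k)}$, the observation that only commutativity of the $x_i$ is used, hence the applicability to the operators $\hat a_i^\dagger$ in Eq.~\eqref{eq:permanentA}) goes through verbatim.
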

A detailed proof is given in Kan (2007)\cite{kan:2007}. The isomorphism
between quantum states and multivariate polynomials from Theorem 3.6 of Aaronson and Arkhipov (2011) \cite{Aaronson2011} (see also Yung et. al (2016) \cite{yung2016}) connects the above lemma with the linear optical quantum system, which results in the following identity:

\begin{theorem}
	There exists a generalized formula for the matrix permanent with repeated rows and columns that is expressed as
	\begin{align}
	\mathrm{Per}\left([A]_{\vec{n},\vec{m}}\right)=&
	\frac{1}{2^{N}}\sum_{v_{1}=0}^{n_{1}}\cdots\sum_{v_{M}=0}^{n_{M}}(-1)^{N_{v}}
	\begin{pmatrix}
	n_{1}\\v_{1}
	\end{pmatrix}
	\cdots
	\begin{pmatrix}
	n_{M}\\v_{M}
	\end{pmatrix}
	\prod_{j=1}^{M}\Big[\sum_{i=1}^M (n_i-2v_i) A_{ij}\Big]^{m_{j}}. 
	\label{eq:permseries2}
	\end{align}	
\end{theorem}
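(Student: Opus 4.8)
The plan is to push Kan's identity (Lemma~\ref{kan}) through the operator--polynomial correspondence already used for \eqref{eq:permanentA}, so that the permanent appears as a single coefficient. First I would realize the input state as a polynomial in creation operators: from the exponential form $\hA=\exp[\sum_{i,j}\ha_i^{\dagger}(\ln A^{T})_{ij}\ha_j]$ one gets the Heisenberg action $\hA\,\ha_i^{\dagger}\,\hA^{-1}=\sum_{j}A_{ij}\ha_j^{\dagger}$ together with $\hA\ket 0=\ket 0$, hence, writing $\ket{\vn}=(\prod_i n_i!)^{-1/2}\prod_i(\ha_i^{\dagger})^{n_i}\ket 0$,
\[\hA\ket{\vn}=\Big(\prod_i n_i!\Big)^{-1/2}\prod_{i=1}^{M}\Big(\sum_{j=1}^{M}A_{ij}\ha_j^{\dagger}\Big)^{n_i}\ket 0.\]
Since the $\ha_j^{\dagger}$ mutually commute, the product on the right is exactly $x_1^{n_1}\cdots x_M^{n_M}$ evaluated at $x_i=\sum_j A_{ij}\ha_j^{\dagger}$, and Lemma~\ref{kan} is a polynomial identity in commuting indeterminates, so it may be applied verbatim with this substitution.

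Second, I would substitute into \eqref{expansion}. The linear combination inside the bracket collapses to $\sum_i(n_i-2v_i)\sum_j A_{ij}\ha_j^{\dagger}=\sum_j B_j(\vec{v})\,\ha_j^{\dagger}$, where $B_j(\vec{v}):=\sum_i(n_i-2v_i)A_{ij}$. Expanding $\big(\sum_j B_j(\vec{v})\ha_j^{\dagger}\big)^{N}\ket 0$ by the multinomial theorem gives a sum over occupation vectors $\vec{k}$ with $\sum_j k_j=N$; pairing with $\bra{\vm}$ keeps only the term $\vec{k}=\vm$, which contributes the scalar $\binom{N}{m_1,\dots,m_M}\prod_j B_j(\vec{v})^{m_j}\prod_j\sqrt{m_j!}=N!\,\prod_j B_j(\vec{v})^{m_j}/\sqrt{m_j!}$. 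Combining this with the overall factor $1/(N!2^N)$ and the binomials $\prod_i\binom{n_i}{v_i}$ from Lemma~\ref{kan}, the $N!$'s cancel and one obtains $\bra{\vm}\hA\ket{\vn}=\big(2^N\prod_k\sqrt{n_k!\,m_k!}\big)^{-1}\sum_{\vec{v}}(-1)^{N_v}\prod_i\binom{n_i}{v_i}\prod_j B_j(\vec{v})^{m_j}$.

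Third, I would multiply both sides by $\prod_k\sqrt{n_k!\,m_k!}$, identify the left side with $\mathrm{Per}([A]_{\vn,\vm})$ via \eqref{eq:permanentA}, and write $B_j(\vec{v})$ back out as $\sum_i(n_i-2v_i)A_{ij}$; this yields precisely \eqref{eq:permseries2}.

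I do not expect a conceptual obstacle: every step is elementary algebra, and the only place that really demands care is the bookkeeping of the $\sqrt{n_k!}$ and $\sqrt{m_k!}$ normalizations together with the cancellation of $N!$ — a stray sign or factorial there is the easiest way to end up with the wrong prefactor. As a cross-check, and as a completely operator-free route, I would also derive the same identity from the elementary fact $\mathrm{Per}([A]_{\vn,\vm})=\big(\prod_j m_j!\big)\times\big(\text{coefficient of }\prod_j y_j^{m_j}\ \text{in}\ \prod_i(\sum_j A_{ij}y_j)^{n_i}\big)$ in commuting indeterminates $y_j$: apply Lemma~\ref{kan} to $\prod_i x_i^{n_i}$ with $x_i=\sum_j A_{ij}y_j$, then read off the coefficient of $\prod_j y_j^{m_j}$ from $[\sum_j B_j(\vec{v})y_j]^N$, which supplies the factor $N!/\prod_j m_j!$; this cancels against the $1/N!$ of Lemma~\ref{kan} and the leading $\prod_j m_j!$, leaving \eqref{eq:permseries2} at once.
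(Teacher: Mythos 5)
Your proposal is correct and follows essentially the same route as the paper's proof: expand the Fock state via Kan's identity (Lemma~\ref{kan}) applied to commuting creation operators, use the Heisenberg action of $\hat{A}$ to replace $\ha_i^{\dagger}$ by $\sum_j A_{ij}\ha_j^{\dagger}$, expand by the multinomial theorem, and project onto $\bra{\vm}$ via $\langle 0\vert\ha_1^{m_1}\cdots\ha_M^{m_M}(\ha_1^{\dagger})^{s_1}\cdots(\ha_M^{\dagger})^{s_M}\vert 0\rangle=\prod_k m_k!\,\delta_{m_k s_k}$, with the factorial bookkeeping coming out exactly as you describe. Your operator-free cross-check via the coefficient-extraction identity $\mathrm{Per}([A]_{\vn,\vm})=(\prod_j m_j!)\,[\prod_j y_j^{m_j}]\prod_i(\sum_j A_{ij}y_j)^{n_i}$ is a valid and more elementary alternative that the paper does not spell out, but it is supplementary rather than a different proof strategy.
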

\begin{proof}
	From Theorem 3.6 of Aaronson and Arkhipov (2011) \cite{Aaronson2011}, the Fock state $|\vn\>=\bigotimes_{k=1}^{M}(\hat{a}^{\dagger}_{k})^{n_{k}}/\sqrt{n_{k}!}\vert 0\rangle$ can be expanded using Lemma \ref{kan} as
	\begin{align}
	\vert\vec{n}\rangle=&\frac{1}{(\prod_{k=1}^{M}\sqrt{n_{k}!})N!2^{N}} \sum_{v_{1}=0}^{n_{1}}\cdots\sum_{v_{M}=0}^{n_{M}}(-1)^{N_{v}} 
	\begin{pmatrix}
	n_{1}\\v_{1}
	\end{pmatrix}
	\cdots
	\begin{pmatrix}
	n_{M}\\v_{M}
	\end{pmatrix}
	\Big[\sum_{i=1}^M(n_i-2v_i)\ha_i^{\dagger}\Big]^{N}
	\vert\vec{0}\rangle .
	\label{eq:Fockseries}
	\end{align}
	By substituting Eq. \eqref{eq:Fockseries} into Eq. \eqref{eq:permanentA}, we have
	\begin{align}
	\mathrm{Per}\left([A]_{\vn,\vm}\right)=
	&\frac{1}{N!2^{N}} \sum_{v_{1}=0}^{n_{1}}\cdots\sum_{v_{M}=0}^{n_{M}}(-1)^{N_{v}}
	\begin{pmatrix}
	n_{1}\\v_{1}
	\end{pmatrix}
	\cdots
	\begin{pmatrix}
	n_{M}\\v_{M}
	\end{pmatrix} 
	\langle 0\vert\hat{a}_{1}^{m_{1}}\cdots\hat{a}_{M}^{m_{M}}
	\Big[ \sum_{ij}(n_i-2v_i) A_{ij} \hat{a}^{\dagger}_j\Big]^{N}
	\vert 0\rangle .
	\label{eq:permseries1}
	\end{align}	
	On the other hand, $\Big[ \sum_{ij}(n_i-2v_i) A_{ij} \hat{a}^{\dagger}_j\Big]^{N}$ is expanded using the multinomial formula 
	\begin{align}
	\Big(\sum_{i=1}^M y_i\Big)^N = \sum_{\sum_{i=1}^M s_i=N}\frac{N!}{\prod_i s_i}y_1^{s_1}\cdots y_M^{s_M},
	\end{align} as
	\begin{align}\label{uexpansion}
	&\Big[ \sum_{ij}(n_i-2v_i) A_{ij} \hat{a}^{\dagger}_j\Big]^{N} =\sum_{\sum_{i=1}^M s_i=N}\frac{N!}{\prod_i s_i!} \prod_{j}\Big[\sum_{i}(n_i-2v_i) A_{ij} \hat{a}^{\dagger}_j\Big]^{s_j}.
	\end{align}
	Therefore, by substituting Eq. \eqref{uexpansion} into Eq. \eqref{eq:permseries1}, we obtain  Eq. \eqref{eq:permseries2} with the identity 
	$\langle 0\vert\hat{a}_{1}^{m_{1}}\cdots\hat{a}_{M}^{m_{M}}(\hat{a}_{1}^{\dagger})^{s_{1}}\cdots(\hat{a}_{M}^{\dagger})^{s_{M}}\vert0\rangle=\prod_{k=1}^{M}(m_{k}!\delta_{m_k s_k})$.
	
\end{proof}

With $A=U$ (unitary operator), the above formula is for the computation of transition amplitudes with multiple photons in both input and output modes.

By exploiting the symmetry in Eq. \eqref{expansion}~\cite{kan:2007}, the number of terms can be reduced to about a half of that in Eq. \eqref{eq:permseries2}. First, when at least one of $\{n_{k}\}$ is an odd number, $n_{1}$ can be chosen to be an 
odd number without loss of generality. Then Eq.~\ref{eq:permseries2} is simplified as
\begin{align}
&\mathrm{Per}\left([A]_{\vn,\vm}\right) =
\frac{1}{2^{N-1}}\sum_{v_{1}=0}^{(n_{1}-1)/2}\cdots\sum_{v_{M}=0}^{n_{M}}  (-1)^{N_{v}}
\begin{pmatrix}
n_{1}\\v_{1}
\end{pmatrix}
\cdots
\begin{pmatrix}
n_{M}\\v_{M}
\end{pmatrix} \prod_{j=1}^{M}\Big[\sum_{i}(n_i-2v_i)A_{ij}\Big]^{m_{j}} . 
\label{eq:permseries3}
\end{align}
Second, when all $\{n_{k}\}$ are even numbers, one still can reduce the number of terms in the summation by dividing the first 
summation of $v_{1}$ into a summation from 0 to $n_{1}-1$ and $v_1=n_1$. Since $n_{1}-1$ is an odd number, the same symmetry that is used for Eq. \eqref{eq:permseries3} reduces the number of terms.  $\prod_{k}(n_{k}+1)/2$ terms are required for the first case and $(\prod_{k}(n_{k}+1)-1)/2$ terms for the second case. 

Now, we show that our formula is reduced to that of Glynn's \cite{Glynn2010, Glynn2013} when $\vn=\vm=(1,\ldots,1)$ and $N=M$. In this case  $(n_i-2v_i)$ is either +1 or -1 for all $i$, 
and all the binomial coefficients become 1. Then Eq. \eqref{eq:permseries3} can be expressed as
\begin{align}
\textrm{Per}(A) = \frac{1}{2^{N-1}}\sum_{\vec{x}\in \{-1,1\}^N}(\prod_{i=1}^N x_i)\prod_{j=1}^N\Big(\sum_{k=1}^N A_{jk}x_k \Big)
\end{align}
which is the Glynn's formula (see Method).

As pointed out by Gurvits~\cite{gurvits2005,aaronson2012}, Glynn's form can be interpreted as 
an ensemble average over the random vector whose entries are $\pm 1$. Likewise, one can interpret Eq. \eqref{eq:permseries2} as a randomized 
algorithm in which $v_{k}$ is randomly generated among $(0,1, \ldots,n_{k})$ with probability $p(v_{k})=\binom{n_{k} }{v_{k} }/2^{n_k}$. Accordingly, Eq.~\ref{eq:permseries2} is rewritten as 
\begin{align}
\mathrm{Per}\left([A]_{\vn,\vn}\right)=
\sum_{\vec{v}=\vec{0}}^{\vn}
p(\vec{v})
G(\vec{v}) ,
\label{eq:permseries6}
\end{align}
where $p(\vec{v})\equiv p(v_{1})\cdots p(v_{M})$ and $\sum_{\vec{v}=\vec{0}}^{\vec{n}}
p(\vec{v})=1$, and $G(\vec{v})\equiv (-1)^{N_{v}}
\prod_{j=1}^{M}[\sum_{i}(n_i-2v_i)A_{ij}]^{m_{j}}$. 
$G(\vec{v})$ is evaluated for each random instance of $\vec{v}$ with the probability $p(\vec{v})$, and then the matrix permanent is approximated as an average, 
\begin{align}
\mathrm{Per}\left([A]_{\vec{n},\vec{m}}\right)\simeq \frac{1}{N_{\mathrm{Sample}}}\sum_{i=1}^{N_{\mathrm{Sample}}}G(\vec{v}^{(i)}),
\end{align}
where $N_{\mathrm{Sample}}$ is the number of samples.

\subsubsection*{Minimal classical runtime} 

The runtime for the classical simulation of Eq. \eqref{eq:permseries2} is obtained by identifying all the summations included in the algorithm as
\begin{align}
\cT(\vn,\vm) = \cO\Big( \prod_{i=1}^{M}(n_i+1)  \a_{\vec{n} } \a_{\vec{m}} \Big) ,
\end{align}
where $\a_\vn$ and $\a_\vm$ are the number of nonzero elements of $\vn$ and $\vm$ respectively. $\prod_{i=1}^{M}(n_i+1)$ comes from $\sum_{v_1=0}^{n_1}\cdots \sum_{v_M=0}^{n_M}$,
and $\a_{\vec{n} } \a_{\vec{m} }$ comes from
$\prod_{j=1}^M [\sum_{i}(n_i -2v_i) A_{ij}]^{m_j}$
in Eq. \eqref{eq:permseries2}.

On the other hand, we can expand $|\vm\>$ instead of $|\vn\>$ with Kan's  series expansion as in Eq. \eqref{eq:Fockseries}. From this input-output symmetry, we obtain another runtime
\begin{align}
\cT'(\vn,\vm) = \cO\Big( \prod_{i=1}^{M}(m_i+1)\a_{\vec{n} } \a_{\vec{m} } \Big).
\end{align}
We can choose the shorter one between $\cT$ and $\cT'$ for the optimal classical simulation. 
Therefore, the minimal running time for the algorithm, denoted by
$\cT_{min}(\vec{n},\vec{m})$, is given by 
\begin{align}
\cT_{min}(\vec{n},\vec{m})=
\mathcal{O}\Big[ \min\Big( \prod_{i=1}^{M}(n_i+1), \prod_{j=1}^{M}(m_j+1) \Big) \a_{\vec{n} } \a_{\vec{m} }  \Big].
\label{rt} 
\end{align}
A special case is when both $n_i$ and $m_i$ are not bigger than $1$ for all $i$. Then $\a_\vn=\a_\vm=N$ and $\cT(\vec{n},\vec{m}) = 2^N N^2$, which is the same runtime as that of Ryser's formula.

The minimal runtime for our algorithm can be compared to that of another generalized algorithm suggested in  Yung $et$ $al.$ (2016) \cite{yung2016}. Interestingly enough, the minimal runtime for the algorithm is exactly equal to that of ours (the same thing happens when we compare the runtime for Ryser's  and Glynn's formula). A brief explanation is presented in Methods. 
This phenomena is intriguing since these two algorithms appear from very different mathematical backgrounds. While our algorithm is constructed from a series expansion of collective variables, the algorithm in Yung $et$ $al.$ (2016) is a direct generalization of Aaronson and Hance's algorithm \cite{aaronson2012}.  Two algorithms created from two totally different paths have the same classical runtime, from which we can surmise that the minimal runtime $\cT_{min}$ is a credible criterion for the computational complexity of the generalized Fock state BS.

\subsubsection*{The minimal runtime and the Fock state concurrence sum}

Now we are ready to see the functional relation of $\cT_{\min}$ with the Fock state concurrence sum $C_S$. Actually, this relation is easily observed by reexpressing $\cT_{\min}(\vn,\vm)$ by expanding $\prod_{i=1}^M (n_i+1)$ along the order of $n_i$ as
\begin{align}
\prod_{i=1}^M (n_i+1)=&  1 + \sum_{i}n_i + \sum_{i_1<i_2}n_{i_1}n_{i_2} + \cdots \nn \\ 
& + \sum_{i_1<i_2< \cdots <i_M}n_{i_1}n_{i_2}\cdots n_{i_M}.
\end{align}
From the definition of the elementary symmetric polynomial (see Eq. \eqref{esp}),
we have 
\begin{align}
\prod_{i=1}^M (n_i+1)= \sum_{k=0}^{\a_\vn} X_{k}(\vn).
\end{align}
Note that the summation is until $k=\a_{\vec{n}}$ because $X_k(\vn) =0$ for $k> \a_\vn$.
As a result, $\cT_{min}(\vn,\vm)$ is rewritten as
\begin{align}
\cT_{min}(\vec{n},\vec{m})=
\mathcal{O}\Big[\Big(\min \Big( \sum_{k=0}^{\a_{\vec{n}} }X_k(\vec{n}), \sum_{l=0}^{\a_{\vec{m}} }X_l(\vec{m})\Big) \a_{\vec{n} } \a_{\vec{m} }  \Big].
\end{align}

By using Definition \ref{concurrencesum}, the minimal runtime is finally rewritten as  
\begin{align}
\cT_{min}(\vec{n},\vec{m})=
\mathcal{O}\Big[2^N \Big(\min[C_S(\vec{n}),C_S(\vec{m}) ] \Big) \times \a_{\vec{n} } \a_{\vec{m} }  \Big],
\label{runtimeconcurrence}
\end{align}
which is a composition of the Fock state concurrence sum and Fock state coherence rank. This expression shows that in linear optics the Fock state concurrence sum is a critical resource that determines the computational complexity. We should emphasize that --in so far as we know-- this is the first evidence that the summation of all the family members of concurrence can operate as an independent resource. Most works on the generalized concurrence in entanglement and coherence have focused on the role of some specific member as the resource for practical quantum processes (see, e.g., Sent{\'\i}s $et$ $al.$ (2016) \cite{sentis2016quantifying}, Girard $et$ $al.$ (2017) \cite{girard2017entanglement}, Chin (2017) \cite{ Chin2017cn} and Chin (2017) \cite{Chin2017gcn}). On the other hand, as we have just seen, the generalized coherence concurrence of Fock state acts as a whole in the multimode linear optical system. In other words, not an individual member $C_k$ but the summation of the whole members $C_S$ becomes the deterministic resource for the process we are interested in.   

As an example, when $\vn = (N, 0,\cdots, 0)$, we have $C_k(\vn)=0$ for $k \ge 2$ and  $C_S(\vn) = \frac{1+N}{2^N}$ (the minimal concurrence sum), which results in $\cT_{\min}(\vn,\vm)=\cO \Big[ (1+N)\a_{\vec{m}} \Big] \le \cO \Big[ (1+N)N\Big]$. For this case the runtime becomes polymonial. As another example, when $\vn=\vm=(1,\cdots, 1, 0,\cdots,0)$, we have $C_S(\vn)=C_S(\vm) = 1$ (the maximal concurrence sum) and $\cT_{\min}(\vn,\vm)=\cO \Big[ 2^N N^2 \Big]$.



Eq. \eqref{runtimeconcurrence} also reveals an intriguing property of $\cT_{min}$, which contrasts with that of the additive error bound $\cE$ for an approximated permanent estimator. In Chin $et$ $al.$ (2017)\cite{Chin2017maj}, \emph{the Boltzmann and Shannon entropy of elementary quantum complexity} is introduced to evaluate the quantum complexity of the given quantum particle distributions. And $\cE$ is explicitly expressed as the difference between the Boltzmann entropy and Shannon entropy of elementary quantum complexity. On the other hand, the relation between the entropies and $\cT_{min}$ is implicit and only can be intuitively explained. Eq. \eqref{runtimeconcurrence} indicates that the generalized Fock state concurrence is another criterion for the computational complexity of linear optical systems. We can state that \emph{both entropy and concurrence are crucial measures (or resources) that directly determines the quantum complexity of linear optical computers}.

Before closing this section, it is worth mentioning the role of the extended definitions for general states (Definition \ref{generalconcurrence}). With such definitions, we can calculate the concurrence sum for arbitary states, including coherent states and thermal states, etc.
As a simple example, when  $\r$ is a thermal state, i.e.,  
\begin{align}
\r =\r^{th}=\sum_{N=0 }^{\infty}\sum_{\sum_i n_i=N}^{\infty}\Big( \prod_{i=1}^M\frac{\bar{n}_i^{n_i}}{(\bar{n}_i+1)^{n_i+1}}\Big) |\vn\>\<\vn|,
\end{align} 
where $\bar{n}_i$ represents the average photon number for each $i$,
we have
\begin{align}
C_k(\r^{th}) = \sum_{N=0 }^{\infty}\sum_{\sum_i n_i=N}^{\infty}\Big( \prod_{i=1}^M\frac{\bar{n}_i^{n_i}}{(\bar{n}_i+1)^{n_i+1}}\Big)C_k(|\vn\>).
\end{align}
Since $C_k(|\vn\>)<1$ except when $|\vn\>$ is maximally coherent, it is easy to see that $C_S(\r^{th})< C_S(|\vn\> = |\vec{1}_{N},\vec{0}_{M-N}\>=1$. We can surmise that the reason why the thermal state BS is simpler to simulate classically than the original BS \cite{rahimi2015} is that the former has less quantum resource, i.e., concurrence sum, than the latter. This viewpoint would be compared more rigorously to that of Rahime-Keshari $et$ $al.$ (2015) \cite{rahimi2015} in the future. We expect that it would reveal the role of concurrence sum in the complexity of various BS systems, such as Gaussian BS \cite{lund2014, rahimi2015, hamilton2017} and Vibronic BS \cite{huh2015v, huh2017v}. 

\section*{Discussion}


We expect our research to develop into two aspects, which are closely relevant to each other. First, the relation of the generalized concurrence $C_{k}(\vn)$ with the exact classical simulation of matrix permanents has many similarities with that of the Boltzmann entropy $S_{B}^q(\vn)$ with the randomized algorithm for approximated permanent computation in Chin $et$ $al.$ (2017)\cite{Chin2017maj}.  By delving into the role of $C_k(\vn)$ and $S_B^q(\vn)$ further, we could formulate the quantum resource theory of Fock state that is a useful tool for understanding the quantum computing power of linear optical computing. Second, our approach to the complexity problem of BS can be compared to that of Rahimi-Keshari $et$ $al.$ (2015)\cite{rahimi2015}, which investigated the role of single-mode nonclassicality in computational complexity. This viewpoint is different from Chin $et$ $al.$ (2017) \cite{Chin2017maj} that focused on the multimode quantum correlation. We expect that there exists a unified theory that embraces the partial interpretations of former works, and the Fock state resource theory including concurrence and entropy is a strong candidate for such a theory.

\section*{Methods}

\subsection*{Glynn's formula and its generalization }\label{A}
Here we briefly introduce Glynn's formula for $N\times N$ matrix permanent computation and its generalization for the permanents of matrices that have repeated rows and columns. 

Glynn's formula with the random variable expectation for $N\times N$ matrix $A$ is given by
\begin{align}
\textrm{Per}(A) = \frac{1}{2^{N-1}}\sum_{\vec{x}\in \{-1,1\}^N}(\prod_{i=1}^N x_i)\prod_{j=1}^N\Big(\sum_{k=1}^N A_{jk}x_k \Big).
\label{Glynn's}
\end{align}
The summation of $\vec{x}$ is over $\vec{x} \in \{ -1,1\}^N$, or $\vec{x} \in \mathcal{X} \equiv \mathcal{R}[2]\times \cdots \times \mathcal{R}[2]$, where $\mathcal{R}[i]$ is a set that consists of the $i$th root of unity. 	 

When $A$ has repeated rows or columns, and the $i$th column (or row) is repeated $n_i$-times, Eq. \eqref{Glynn's} is generalized to \cite{aaronson2012} 
\begin{align}
\textrm{Per}(A) =\sum_{\vec{z}\in \mathcal{X}} v_{\vec{n}}^2(\prod_{i=1}^{N} \bar{z}_i^{n_i}) \prod_{j=1}^N \Big( \sum_{k=1}^N A_{jk}z_k\Big),
\label{genGlynn's}
\end{align}
where $\mathcal{X} \equiv \mathcal{R}[n_1+1]\times \cdots \times \mathcal{R}[n_N+1]$ and $v_{\vn}\equiv \sqrt{\prod_{i=1}^N(n_i!/n_i^{n_i})}$.  The runtime for this algorithm is $\cO\big(\prod_{k=1}^N(n_k+1)\a_{\vec{n}}N\big)$  $=\cO\big(\sum_{k=1}^{\a_{\vec{n}} } X_{k}(\vn) \a_{\vec{n}}N \big)$.

When $A$ has repeated rows and columns, and the $i$th column is repeated $n_i$-times and the $j$th column is repeated $m_j$-times, the above equations are expressed more generally  \cite{yung2016} as 
\begin{align}
\textrm{Per}(A) =\sum_{\vec{z}\in \mathcal{X}} v_{\vec{n}}^2(\prod_{i=1}^{N} \bar{z}_i^{n_i}) \prod_{j=1}^N \Big( \sum_{k=1}^N A_{jk}z_k\Big)^{m_j}.
\label{mgGlynn's}
\end{align} 
Here, $\mathcal{X}$ is the same as that in Eq. \eqref{genGlynn's}.	
From the summation form of Eq. \eqref{mgGlynn's} and the symmetry between rows and columns, it is straightforward to see that the minimal runtime $\cT_{min}(\vn,\vm)$ for Eq. \eqref{mgGlynn's} is equal to Eq. \eqref{rt}.


\section*{Acknowledgements}

This work was supported by Basic Science Research Program through the National Research Foundation of Korea (NRF) funded by the Ministry of Education, Science and Technology (NRF-2015R1A6A3A04059773).

\section*{Author contributions statement}
S. C. and J. H. conceived and designed the experiments, worked on the theory, and wrote the paper.

\section*{Additional information}
 \textbf{Competing Financial Interests} \\
 The authors declare that they have no competing interests.

\end{document}